\newcommand{\be}{\begin{equation}}
\newcommand{\en}{\end{equation}}
\newcommand{\bea}{\begin{eqnarray}}
\newcommand{\ena}{\end{eqnarray}}
\newcommand{\beano}{\begin{eqnarray*}}
\newcommand{\enano}{\end{eqnarray*}}
\newcommand{\bee}{\begin{enumerate}}
\newcommand{\ene}{\end{enumerate}}
\newcommand{\N}{\mathfrak N}
\newcommand{\M}{\mathfrak M}
\newcommand{\mc}{\mathcal}
\newcommand{\D}{{\mc D}}
\newcommand{\F}{{\cal F}}
\newcommand{\I}{{\mathbb{I}}}
\newcommand{\1}{1 \!\! 1}
\newcommand{\Hil}{\mc H}
\newtheorem{thm}{Theorem}
\newtheorem{cor}[thm]{Corollary}
\newtheorem{defn}[thm]{Definition}
\newenvironment{proof}{\noindent {\bf Proof --}}{\hfill$\square$ \vspace{3mm}\endtrivlist}
\begin{document}

\thispagestyle{empty}

\vspace*{2cm}

\begin{center}
{\Large \bf  Non linear pseudo-bosons}\\[10mm]

{\large F. Bagarello}\\
  Dipartimento di Metodi e Modelli Matematici,
Facolt\`a di Ingegneria,\\ Universit\`a di Palermo, I-90128  Palermo, Italy\\
e-mail: bagarell@unipa.it\\ Home page:
www.unipa.it/$^\sim$bagarell\\

\vspace{3mm}

\end{center}

\vspace*{2cm}

\begin{abstract}
\noindent In a series of recent papers the author has introduced the notion of (regular) pseudo-bosons showing, in particular, that two number-like operators, whose spectra are ${\Bbb N}_0:={\Bbb N}\cup\{0\}$, can be naturally introduced. Here we extend this construction to operators with rather more general spectra. Of course, this generalization can be applied to many more physical systems. We  discuss several examples of our framework.

\end{abstract}

\vspace{2cm}


\vfill


\newpage

\section{Introduction}

In a series of recent papers \cite{bagpb1,bagpb2,bagpb3,bagcal,bagpbJPA}, we have investigated some mathematical aspects of the
so-called {\em pseudo-bosons} (PB),  originally introduced by Trifonov
in \cite{tri}. They arise from the canonical commutation relation
$[a,a^\dagger]=\1$ upon replacing $a^\dagger$ by another (unbounded)
operator $b$ not (in general) related to $a$: $[a,b]=\1$. We have
shown that, under suitable assumptions, $N=ba$ and $N^\dagger=a^\dagger b^\dagger$ can be both
diagonalized, and that their spectra coincide with the set of
natural numbers (including 0), ${\Bbb N}_0$. However the sets of
related eigenvectors are not orthonormal (o.n) bases but,
nevertheless, they are automatically {\em biorthogonal}. In most of the
examples considered so far, they are bases of the Hilbert space of the system,
$\Hil$, and, in some cases, they turn out to be {\em Riesz bases}.

In \cite{bagpb4} and \cite{abg} some physical examples arising from concrete examples in quantum mechanics have been discussed. In particular, the difference between {\em regular pseudo-bosons} (RPB) and PB has been introduced: RPB, see Section II, arise when the  eigenvectors of $N$ and $N^\dagger$ are mapped the ones into the others by a bounded operator with bounded inverse. If this operator is unbounded, then our PB are {\em not regular}. The strong limit of our construction is that it may be used successfully only for those operators, self-adjoint or not, for which the n-th eigenvalue $\lambda_n$ is linear in $n$: $\lambda_n=\omega\, n+k$, where $\omega$ and $k$ are real constants. As we have shown in \cite{bagpb4} and \cite{abg}, as well as in \cite{bagrev}, this is exactly what happens in several interesting physical systems. However, the spectra of most quantum mechanical hamiltonians is not linear in $n$ so that it is natural to wonder which part of the structure of RPB, if any, can be extended to include these more general operators. Indeed this is possible, and this extension is the main result of this paper.

The paper is organized as follows:  in the next section we
introduce and discuss some features of $d$-dimensional PB.  In Sections III we show how non-linear coherent states suggest a possible extension of RPB to operators with non-linear spectra. This will originate new excitations which we will call {\em non-linear pseudo bosons} (NLPB) and {\em non-linear regular pseudo bosons} (NLRPB). We will see that a similar structure as that for PB is recovered. Section IV is devoted to examples, while our conclusions are given in Section V.

\section{$d$-dimensional PB and RPB}

In this section, for completeness sake, we review our construction for PB and RPB, which can be found in more details in \cite{bagpb1,bagpbJPA}.

Let $\Hil$ be a given Hilbert space with scalar product
$\left<.,.\right>$ and related norm $\|.\|$. We introduce $d$
pairs of operators, $a_j$ and $b_j$, $j=1,2,\ldots,d$, acting on $\Hil$ and
satisfying the following commutation rules \be [a_j,b_j]=\1,
\label{21} \en where $j=1,2,\ldots,d$, while all the other commutators are trivial. Of course, they collapse to the CCR's for $d$
independent modes if $b_j=a^\dagger_j$, $j=1,2,\ldots,d$. Since $a_j$ and $b_j$ are unbounded operators,  they cannot be
defined on all of $\Hil$. Following \cite{bagpb1}, and writing
$D^\infty(X):=\cap_{p\geq0}D(X^p)$ (the common  domain of all the powers of the
operator $X$), we consider the
following:

\vspace{2mm}

{\bf Assumption 1.--} there exists a non-zero
$\varphi_{\bf 0}\in\Hil$ such that $a_j\varphi_{\bf 0}=0$, $j=1,2,\ldots,d$,
and $\varphi_{\bf 0}\in D^\infty(b_1)\cap D^\infty(b_2)\cap\cdots\cap D^\infty(b_d)$.

{\bf Assumption 2.--} there exists a non-zero $\Psi_{\bf 0}\in\Hil$
such that $b_j^\dagger\Psi_{\bf 0}=0$, $j=1,2,\ldots,d$, and $\Psi_{\bf 0}\in
D^\infty(a_1^\dagger)\cap D^\infty(a_2^\dagger)\cap\cdots\cap D^\infty(a_d^\dagger)$.

\vspace{2mm}

Under these assumptions we can introduce the following vectors in
$\Hil$:

\be
\left\{
\begin{array}{ll}
\varphi_{\bf n}:=\varphi_{n_1,n_2,\ldots,n_d}=\frac{1}{\sqrt{n_1!n_2!\cdots n_d!}}\,b_1^{n_1}\,b_2^{n_2}\cdots b_d^{n_d}\,\varphi_{\bf 0}\\
\Psi_{\bf n}:=\Psi_{n_1,n_2,\ldots,n_d}=\frac{1}{\sqrt{n_1!n_2!\cdots n_d!}}\,{a_1^\dagger}^{n_1}\,{a_2^\dagger}^{n_2}\cdots {a_d^\dagger}^{n_d}\,\Psi_{\bf 0},
\end{array}
\right.\label{22}\en $n_j=0, 1, 2,\ldots$ for all $j=1,2,\ldots,d$. Let us now define the unbounded
operators $N_j:=b_ja_j$ and $\N_j:=N_j^\dagger=a_j^\dagger
b_j^\dagger$, $j=1,2,\ldots,d$.  It is possible to check that
$\varphi_{\bf n}$ belongs to the domain of $N_j$, $D(N_j)$, and
$\Psi_{\bf n}\in D(\N_j)$, for all possible $\bf n$. Moreover,
\be N_j\varphi_{\bf n}=n_j\varphi_{\bf n},  \quad \N_j\Psi_{\bf n}=n_j\Psi_{\bf n}.
\label{23}\en

Under the above assumptions, and if we chose the normalization of
$\Psi_{\bf 0}$ and $\varphi_{\bf 0}$ in such a way that
$\left<\Psi_{\bf 0},\varphi_{\bf 0}\right>=1$, we find that \be
\left<\Psi_{\bf n},\varphi_{\bf m}\right>=\delta_{\bf n,m}=\prod_{j=1}^d \delta_{n_j,m_j}. \label{27}\en Then the sets
$\F_\Psi=\{\Psi_{\bf n}\}$ and
$\F_\varphi=\{\varphi_{\bf n}\}$ are {\em biorthogonal} and,
because of this, the vectors of each set are linearly independent.
If we now call $\D_\varphi$ and $\D_\Psi$ respectively the linear
span of  $\F_\varphi$ and $\F_\Psi$, and $\Hil_\varphi$ and
$\Hil_\Psi$ their closures,  we make the

\vspace{2mm}

{\bf Assumption 3.--} The above Hilbert spaces all coincide:
$\Hil_\varphi=\Hil_\Psi=\Hil$.

\vspace{2mm}

This means, in particular,
that both $\F_\varphi$ and $\F_\Psi$ are bases of $\Hil$. Let us
now introduce the operators $S_\varphi$ and $S_\Psi$ via their
action respectively on  $\F_\Psi$ and $\F_\varphi$: \be
S_\varphi\Psi_{\bf n}=\varphi_{\bf n},\qquad
S_\Psi\varphi_{\bf n}=\Psi_{\bf n}, \label{213}\en for all $\bf n$, which also imply that
$\Psi_{\bf n}=(S_\Psi\,S_\varphi)\Psi_{\bf n}$ and
$\varphi_{\bf n}=(S_\varphi \,S_\Psi)\varphi_{\bf n}$, for all
$\bf n$. Hence \be S_\Psi\,S_\varphi=S_\varphi\,S_\Psi=\1 \quad
\Rightarrow \quad S_\Psi=S_\varphi^{-1}. \label{214}\en In other
words, both $S_\Psi$ and $S_\varphi$ are invertible and one is the
inverse of the other. Furthermore,  they are
both positive, well defined and symmetric, \cite{bagpb1}. Moreover, it is possible to write these operators using the
bra-ket notation as \be S_\varphi=\sum_{\bf n}\,
|\varphi_{\bf n}><\varphi_{\bf n}|,\qquad S_\Psi=\sum_{\bf n}
\,|\Psi_{\bf n}><\Psi_{\bf n}|. \label{212}\en
 These expressions are
only formal, at this stage, since the series may not converge in
the uniform topology and the operators $S_\varphi$ and $S_\Psi$ could be unbounded.
Indeed we know,  \cite{you}, that two biorthogonal bases are related by a bounded operator, with bounded inverse, if and only if they are Riesz bases\footnote{Recall that a set of vectors $\phi_1, \phi_2 , \phi_3 , \; \ldots \; ,$ is a Riesz basis of a Hilbert space $\mathcal H$, if there exists a bounded operator $V$, with bounded inverse, on $\mathcal H$, and an orthonormal basis of $\Hil$,  $\varphi_1, \varphi_2 , \varphi_3 , \; \ldots \; ,$ such that $\phi_j=V\varphi_j$, for all $j=1, 2, 3,\ldots$}. This is why we have also considerered

\vspace{2mm}

{\bf Assumption 4.--} $\F_\varphi$ and $\F_\Psi$ are Riesz bases of $\Hil$.

\vspace{3mm}
\noindent
This assumption  implies  that $S_\varphi$ and $S_\Psi$ are bounded operators, whose domains can be taken to be all of $\Hil$. Physical motivations have suggested to call {\em pseudo-bosons} (PB) those satisfying  the first three assumptions, and {\em regular pseudo-bosons} (RPB) those satisfying also Assumption 4, \cite{bagrev}.

It is easy to
check that \be S_\Psi\,N_j=\N_jS_\Psi \quad \mbox{ and }\quad
N_j\,S_\varphi=S_\varphi\,\N_j, \label{219}\en $j=1,2,\ldots,d$. This is
related to the fact that the spectra of, say, $N_1$ and $\N_1$
coincide and that their eigenvectors are related by the operators
$S_\varphi$ and $S_\Psi$, in agreement with the literature on
intertwining operators.

\section{Non-linear regular pseudo-bosons}

In this section we will show how essentially the same general framework of RPB can be recovered in a different situation, that is when the operators $N$ and $\N$ have a non-linear spectra. The idea behind this generalization is the same which produces, starting from {\em standard} coherent states, {\em non-linear coherent states}, see \cite{book2} and references therein.

\subsection{Non-linear coherent states}

In the old literature, see \cite{ks} for instance, a standard
coherent state is a vector arising from the action of the
unitary operator $U(z)=e^{z\,a^\dagger-\overline{z}\,a}$,
$z\in\mathbb{C}$ and $[a,a^\dagger]=\1$, on the vacuum of $a$,
$\Phi_0$, $a\Phi_0=0$: $|z>=U(z)\Phi_0$. These normalized vectors
can be written in other equivalent ways, introducing the o.n.
basis $\{\Phi_n,\,n\in\mathbb{N}_0\}$ where
$\Phi_n=\frac{(a^\dagger)^n}{\sqrt{n!}}\,\Phi_0$, as follows: \be
|z>=U(z)\Phi_0=e^{-|z|^2/2}e^{z\,a^\dagger}\,\Phi_0=e^{-|z|^2/2}\,\sum_{k=0}^\infty
\,\frac{z^n}{\sqrt{n!}}\,\Phi_n.\label{51}\en It is well known that $|z>$ are  eigenstate
of $a$: $a|z>=z|z>$, and that they satisfy a {\em resolution of the
identity}: $\frac{1}{\pi}\,\int\,d^2z\,|z><z|=\I$. They also
saturate the Heisenberg uncertainty principle: let
$q=\frac{a+a^\dagger}{\sqrt{2}}$,
$p=\frac{a-a^\dagger}{i\,\sqrt{2}}$, $(\Delta X)^2=<z,X^2z>-<z,X\,z>^2$
for $X=q,p$. Then $\Delta q\,\Delta p=\frac{1}{2}$.

These properties are recovered using a different definition for
the coherent states, see \cite{book2,ali} and references therein,
which generalizes the one above.
Starting from a sequence $\{\epsilon_l,\,l\in\mathbb{N}_0\}$, of non negative
numbers, $\epsilon_l\geq 0$ for all $l\in\mathbb{N}_0$, it is possible to define
some vectors, parametrized by a complex $z$,  as follows: \be
\Xi(z):=N(|z|^2)^{-1/2}\,\sum_{k=0}^\infty
\,\frac{z^n}{\sqrt{\epsilon_n!}}\,\Phi_n,\label{52}\en where
$N(|z|^2)=\sum_{k=0}^\infty \,\frac{|z|^{2n}}{\epsilon_n!}$. With this choice we have
$<\Xi,\Xi>=1$ for all $|z|\leq\rho$, $\rho$ being the radius of
convergence of the series for $N$, and where $\epsilon_0!=1$ and
$\epsilon_n!=\epsilon_1\,\epsilon_2\ldots \epsilon_n$. In particular, if $A$ is an operator satisfying $A\Phi_n=\sqrt{\epsilon_n}\,\Phi_{n-1}$,
 we deduce that $A\,\Xi(z)=z\,\Xi(z)$, so that these
generalized coherent states  are again eigenstates of the
(generalized) annihilation operator $A$. Moreover, in \cite{ali} it
is also shown that the existence of a resolution of the identity,
that is the existence of a measure $d\nu(z,\overline{z})$ such
that
$\int_{C_\rho}\,N(|z|^2)\,|\Xi(z)><\Xi(z)|\,d\nu(z,\overline{z})=\1$,
 is related to the existence of a solution of the
following {\em moment problem}: we put $z=r\,e^{i\theta}$,
$d\nu(z,\overline{z})=d\theta\,d\lambda(r)$,
$C_\rho=\{z=r\,e^{i\theta}, \,\theta\in[0,2\pi[, r\in[0,\rho[\}$,
then we want $d\lambda(r)$ to be such that \be
\int_0^{\rho}\,d\lambda(r)\,r^{2k}=\frac{x_k!}{2\pi}, \quad
\forall k\in{\Bbb N}_0. \label{53}\en It is known that this problem has
not always solution, but when it does, then a resolution of the
identity can be established.

Finally, if we introduce two self-adjoint operators (the
generalized position and momentum operators)
$Q=\frac{A+A^\dagger}{\sqrt{2}}$,
$P=\frac{A-A^\dagger}{i\,\sqrt{2}}$, then $\Xi(z)$ saturates
the Heisenberg uncertainty principle, which now can be written as
\be \Delta Q\,\Delta
P=\frac{1}{2}\left|<AA^\dagger>-|z|^2\right|.\label{54}\en Notice
that, if $x_k=k$, we recover the coherent states in (\ref{51}).

\subsection{Non linear RPB}

Using a similar idea as that producing non-linear coherent states we will now show how the framework in Section II can be extended. The first difficulty is that, since in general $\epsilon_n\neq n$, the commutation rule $[a,b]=\1$ is not expected to hold anymore. Here $\epsilon_n$ can be seen as the real n-th eigenvalue of a non necessarily self-adjoint operator, see below.  Since $\epsilon_n$ does not depend linearly on $n$, in general, it is  convenient to  consider the following definition, which is slightly different from the one given for RPB. We begin with assuming that the sequence $\{\epsilon_n\}$ is strictly increasing and that $\epsilon_0=0$: $0=\epsilon_0<\epsilon_1<\cdots<\epsilon_n<\cdots$. Then, given two operators $a$ and $b$ on the Hilbert space $\Hil$,

\begin{defn}
We will say that the triple $(a,b,\{\epsilon_n\})$ is a family of non-linear regular pseudo-bosons (NLRPB) if the following properties hold:
\begin{itemize}

\item {\bf p1.} a non zero vector $\Phi_0$ exists in $\Hil$ such that $a\,\Phi_0=0$ and $\Phi_0\in D^\infty(b)$.

\item {\bf { p2}.} a non zero vector $\eta_0$ exists in $\Hil$ such that $b^\dagger\,\eta_0=0$ and $\eta_0\in D^\infty(a^\dagger)$.

\item {\bf { p3}.} Calling
\be
\Phi_n:=\frac{1}{\sqrt{\epsilon_n!}}\,b^n\,\Phi_0,\qquad \eta_n:=\frac{1}{\sqrt{\epsilon_n!}}\,{a^\dagger}^n\,\eta_0,
\label{55}
\en
we have, for all $n\geq0$,
\be
a\,\Phi_n=\sqrt{\epsilon_n}\,\Phi_{n-1},\qquad b^\dagger\eta_n=\sqrt{\epsilon_n}\,\eta_{n-1}.
\label{56}\en
\item {\bf { p4}.} The sets $\F_\Phi=\{\Phi_n,\,n\geq0\}$ and $\F_\eta=\{\eta_n,\,n\geq0\}$ are bases of $\Hil$.

\item {\bf { p5}.} $\F_\Phi$ and $\F_\eta$ are Riesz bases of $\Hil$.

\end{itemize}

\end{defn}

{\bf Remarks:--} (1) Notice that the definitions in (\ref{55}) are well posed in the sense that, because of {\bf p1} and {\bf p2}, the vectors $\Phi_n$ and $\eta_n$ are well defined for all $n\geq0$: indeed we have $\Phi_0\in D^\infty(b)$ and $\eta_0\in D^\infty(a^\dagger)$.

(2) If {\bf {\bf p5}} is not satisfied, but the others do, then  we call our particles non-linear pseudo-bosons (NLPB).

(3) But for {\bf p3}, the other conditions above coincide exactly with those of RPB. In fact, we can show that {\bf p3} replaces (and extends) the commutation rule $[a,b]=\1$, which is recovered if $\epsilon_n=n$. Indeed in this case from (\ref{55}) we deduce that $b\,\Phi_n=\sqrt{n+1}\,\Phi_{n+1}$ and $a^\dagger\eta_n=\sqrt{n+1}\,\eta_{n+1}$, while hypothesis (\ref{56}) implies that  $a\,\Phi_n=\sqrt{n+1}\,\Phi_{n+1}$ and $b^\dagger\eta_n=\sqrt{n+1}\,\eta_{n+1}$. Hence $[a,b]\,\Phi_n=\Phi_n$, for all $n\geq0$, so that, being $\F_\Phi$ a basis because of assumption {\bf p4}, $[a,b]=\1$ follows.

\vspace*{3mm}

This last remark shows that NLPB are indeed extensions of PB. Hence  similar results are expected. Indeed, let us introduce the following (not self-adjoint) operators:
\be
M=ba,\qquad \M=M^\dagger=a^\dagger b^\dagger.
\label{57}\en
Then we can check that $\Phi_n\in D(M)\cap D(b)$, $\eta_n\in D(\M)\cap D(a^\dagger)$, and, more than this, that
\be
b\,\Phi_n=\sqrt{\epsilon_{n+1} }\,\Phi_{n+1},\qquad a^\dagger\eta_n=\sqrt{\epsilon_{n+1} }\,\eta_{n+1},
\label{58}\en
which is a consequence of definitions (\ref{55}), see also Remark (3) above, as well as
\be
M\Phi_n=\epsilon_n\Phi_n,\qquad \M\eta_n=\epsilon_n\eta_n,
\label{59}\en
These eigenvalue equations have a very important consequence, close to that deduced for PB: the vectors in $\F_\Phi$ and $\F_\eta$ are mutually orthogonal. More explicitly,
\be
\left<\Phi_n,\eta_m\right>=\delta_{n,m},\left<\Phi_0,\eta_0\right>
\label{510}\en
The proof of this equation does not differ significantly from that for RPB, and will not be given here.

Let us now consider the following condition:

\vspace{2mm}

{\bf {\bf p3}$'$.} The vectors $\Phi_n$ and $\eta_n$ defined in (\ref{55}) satisfy (\ref{510}).

\vspace{2mm}

Then it is possible to check that conditions  \{{\bf p1}, {\bf p2}, {\bf p3}, {\bf p4}\} are equivalent to  \{{\bf p1}, {\bf p2}, {\bf p3}$'$, {\bf p4}\}. We see that {\bf p5} plays no role here. In one direction the implication is clear: each NLPB satisfies {\bf p3}$'$. Viceversa, suppose \{{\bf p1}, {\bf p2}, {\bf p3}$'$, {\bf p4}\} are satisfied. Then, using (\ref{58}) and (\ref{510}), we have $\left<a\Phi_n,\eta_k\right>=\left<\Phi_n,a^\dagger\eta_k\right>=\sqrt{\epsilon_{k+1}}
\left<\Phi_n,\eta_{k+1}\right>=\sqrt{\epsilon_{k+1}}\delta_{n,k+1}\left<\Phi_0,\eta_0\right>$. On the other hand, since $\F_\Phi$ is a basis for $\Hil$, we can expand $a\Phi_n$ as follows: $a\Phi_n=\sum_{l}\,d_l^{(n)}\,\Phi_l$, where the coefficients depend in general on both $n$ and $l$. In particular, since $a\Phi_0=0$, $d_l^{(0)}=0$ for all $l$. Then, using again (\ref{510}), we get $\left<a\Phi_n,\eta_k\right>=\sum_{l}\,\overline{d_l^{(n)}}\,\left<\Phi_l,\eta_{k}\right>=
\overline{d_k^{(n)}}\,\left<\Phi_0,\eta_{0}\right>$. Hence $d_k^{(n)}=\sqrt{\epsilon_{k+1}}\,\delta_{n,k+1}$ and, consequently, $a\Phi_n=\sqrt{\epsilon_n}\,\Phi_{n-1}$, which is the first equation in (\ref{56}). The second equation can be deduced in a similar way: then, {\bf p3} is recovered. In the following, therefore, we can use {\bf p3} or {\bf p3}$'$ depending of which is more convenient for us.

\vspace{2mm}

Carrying on our analysis on the consequences of the definition on NLRPB, and in particular of ${\bf p4}$, we rewrite this assumption in bra-ket formalism as
\be
\sum_n|\Phi_n><\eta_n|=\sum_n|\eta_n><\Phi_n|=\1,
\label{512}\en
while {\bf p5} implies that the operators $S_\Phi:=\sum_n|\Phi_n><\Phi_n|$ and $S_\eta:=\sum_n|\eta_n><\eta_n|$ are positive, bounded, and invertible. Notice that these definitions are the right ones if $\left<\Phi_0,\eta_0\right>=1$, which we will assume from now on, otherwise an extra normalization factor should be considered. Moreover, as in \cite{bagpb1}, it is possible to show that $S_\Phi=S_\eta^{-1}$. The new fact is that the operators $a$ and $b$ do not, in general, satisfy any {\em simple} commutation rule. Indeed, we can check that, for all $n\geq0$,
\be
[a,b]\Phi_n=\left(\epsilon_{n+1}-\epsilon_n\right)\Phi_n,
\label{513}\en
which returns (\ref{21}) for $d=1$ if $\epsilon_n=n+\alpha$, for any real $\alpha$ but not in general. Moreover, we can also deduce that $[b^\dagger,a^\dagger]\eta_n=\left(\epsilon_{n+1}-\epsilon_n\right)\eta_n$.

In \cite{bagpbJPA} we have shown that any RPB is, in a certain sense, related to ordinary bosons. We extend here this result, showing that to each NLRPB can be associated a family of non-linear bosons and a bounded operator with bounded inverse. More in details we have

\begin{thm}
Let $(a,b,\{\epsilon_n\})$ be a family of NLRPB. Then there exist a positive operator $T\in B(\Hil)$, with bounded inverse,  an operator $c$ on $\Hil$, and an o.n. basis $\F_{\hat\Phi}=\{\hat\Phi_n\}$ of $\Hil$, with $\hat\Phi_n\in D(c)\cap D(c^\dagger)$, $\forall n\geq0$, such that
\be
[c,c^\dagger]\hat\Phi_n=\left(\epsilon_{n+1}-\epsilon_n\right)\hat\Phi_n,
\label{514}\en
and
\be
a=TcT^{-1},\qquad b=Tc^\dagger T^{-1}.
\label{515}\en
Viceversa, let us consider an operator $c$, an o.n. basis of $\Hil$, $\F_{\hat\Phi}=\{\hat\Phi_n\}$, and a sequence $\{\epsilon_n\}$ such that $0=\epsilon_0<\epsilon_1<\cdots$. Let us assume that, for all $n$, $c\,\hat\Phi_n=\sqrt{\epsilon_{n}}\,\hat\Phi_{n-1}$. Then any positive bounded operator $T$, with bounded inverse, produce two operators $a$ and $b$ as is (\ref{515}) such that $(a,b,\{\epsilon_n\})$ is a family of NLRPB.
\label{th1}\end{thm}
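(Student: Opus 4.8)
The plan is to produce $T$, $c$ and $\F_{\hat\Phi}$ directly from the intertwining operator $S_\Phi$ attached to the NLRPB together with its symmetric square root. For the direct implication I would set $T:=S_\Phi^{1/2}$. Since $S_\Phi$ is positive, bounded and boundedly invertible --- which is exactly what \textbf{p5} gives, as recalled after (\ref{512}) --- the same holds for $T$, and moreover $T=T^\dagger$. Next define $\hat\Phi_n:=T^{-1}\Phi_n$. Reading $S_\Phi=\sum_k|\Phi_k><\Phi_k|$ together with the biorthogonality (\ref{510}), normalized so that $\langle\Phi_0,\eta_0\rangle=1$, gives $S_\Phi\eta_m=\Phi_m$, hence $S_\Phi^{-1}\Phi_m=\eta_m$, so that $\langle\hat\Phi_n,\hat\Phi_m\rangle=\langle\Phi_n,S_\Phi^{-1}\Phi_m\rangle=\langle\Phi_n,\eta_m\rangle=\delta_{n,m}$. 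Since $T^{-1}$ is a topological isomorphism of $\Hil$ and $\F_\Phi$ is a basis, $\F_{\hat\Phi}=\{\hat\Phi_n\}$ is an orthonormal basis of $\Hil$.

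Then I would put $c:=T^{-1}aT$ and let $c^\dagger$ be its adjoint. From (\ref{56}) one gets $c\hat\Phi_n=T^{-1}a\Phi_n=\sqrt{\epsilon_n}\,\hat\Phi_{n-1}$, and the standard computation of the adjoint on an orthonormal basis yields $c^\dagger\hat\Phi_n=\sqrt{\epsilon_{n+1}}\,\hat\Phi_{n+1}$; in particular $\hat\Phi_n\in D(c)\cap D(c^\dagger)$ for every $n$, and $[c,c^\dagger]\hat\Phi_n=(\epsilon_{n+1}-\epsilon_n)\hat\Phi_n$, which is (\ref{514}). The identity $a=TcT^{-1}$ holds by construction; for the other one, using $T=T^\dagger$ we get $Tc^\dagger T^{-1}\Phi_n=Tc^\dagger\hat\Phi_n=\sqrt{\epsilon_{n+1}}\,\Phi_{n+1}$, which by (\ref{58}) equals $b\Phi_n$, so $b=Tc^\dagger T^{-1}$ on the dense span of $\F_\Phi$. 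I expect this to be the only delicate point: $a$, $b$ and $c$ are unbounded, so (\ref{515}) must be read as an identity of operators on the dense domains generated by $\F_\Phi$ and $\F_{\hat\Phi}$ --- the same level of rigour adopted in Sections II and III --- and the reason for choosing $T=S_\Phi^{1/2}$ rather than $S_\Phi$ itself is precisely that the single similarity $T\,\cdot\,T^{-1}$ then carries $c$ to $a$ and, at the same time, $c^\dagger$ to $b$.

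For the converse I would argue as follows. Given $c$, an orthonormal basis $\F_{\hat\Phi}=\{\hat\Phi_n\}$, a sequence $0=\epsilon_0<\epsilon_1<\cdots$ with $c\hat\Phi_n=\sqrt{\epsilon_n}\,\hat\Phi_{n-1}$, and a positive bounded $T$ with bounded inverse, set $a:=TcT^{-1}$ and $b:=Tc^\dagger T^{-1}$, so that $a^\dagger=T^{-1}c^\dagger T$ and $b^\dagger=T^{-1}cT$, and take $\Phi_0:=T\hat\Phi_0$, $\eta_0:=T^{-1}\hat\Phi_0$. Then $a\Phi_0=Tc\hat\Phi_0=0$ because $\epsilon_0=0$, while $b^k\Phi_0=T{c^\dagger}^k\hat\Phi_0=\sqrt{\epsilon_k!}\,T\hat\Phi_k$ is well defined for all $k$, so \textbf{p1} holds; \textbf{p2} follows symmetrically for $\eta_0$. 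A short computation then shows that the vectors (\ref{55}) built from this $a$ and $b$ are exactly $\Phi_n=T\hat\Phi_n$ and $\eta_n=T^{-1}\hat\Phi_n$, whence $a\Phi_n=Tc\hat\Phi_n=\sqrt{\epsilon_n}\,\Phi_{n-1}$ and $b^\dagger\eta_n=T^{-1}c\hat\Phi_n=\sqrt{\epsilon_n}\,\eta_{n-1}$, i.e. \textbf{p3}. Finally $\F_\Phi=\{T\hat\Phi_n\}$ and $\F_\eta=\{T^{-1}\hat\Phi_n\}$ are images of an orthonormal basis under a bounded operator with bounded inverse, hence Riesz bases, which gives \textbf{p4} and \textbf{p5}; so $(a,b,\{\epsilon_n\})$ is a family of NLRPB.

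Summarizing, the only genuine obstacle is the domain bookkeeping in the direct part --- verifying that $T=S_\Phi^{1/2}$ maps the natural domain of $c$ onto that of $a$ and that $c^\dagger$ is implemented by $Ta^\dagger T^{-1}$ on the span of $\F_{\hat\Phi}$; everything else is the routine transport of the (non-linear) bosonic relations through the similarity $T$, exactly as for RPB in \cite{bagpbJPA}.
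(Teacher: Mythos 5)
Your proof is correct and follows exactly the route the paper indicates: the paper omits the argument (deferring to the RPB case in \cite{bagpbJPA}) but states explicitly that $T$ is $S_\Phi^{1/2}$, the square root of the frame operator of the Riesz basis $\F_\Phi$, which is precisely your choice, and the rest of your verification (orthonormality of $\hat\Phi_n=S_\Phi^{-1/2}\Phi_n$ via $S_\Phi^{-1}\Phi_m=\eta_m$, transport of the ladder relations through the similarity, and the converse) is sound at the level of rigour adopted throughout the paper.
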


The proof of this theorem is not significantly different from that for RPB given in \cite{bagpbJPA}, and therefore will not be given here. Once again, the operator $T$ cited above is nothing but the operator $S_\Phi^{1/2}$, where $S_\Phi$, already introduced above, is the frame operator of the Riesz basis $\F_\Phi$. If {\bf p5} is not assumed (i.e. if we consider NLPB which are not regular), $\F_\Phi$ is no longer a Riesz basis and, as a consequence,  $S_\Phi$ is an unbounded operator with unbounded inverse, in general. In \cite{bagpbJPA} we have considered this problem as well, and we have shown that a similar result can be stated also in these milder conditions. We claim that a similar extension also holds in the present context, but we will postpone the details of this analysis to a future paper. As for the nature of the operators $a$, $b$ and $c$ involved here we can prove easily the following

\begin{cor}
Let $(a,b,\{\epsilon_n\})$ be a family of NLRPB with $\sup_n\epsilon_n=\infty$. Then the operators $a$ and $b$ are unbounded.
\end{cor}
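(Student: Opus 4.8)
The plan is to derive unboundedness from the eigenvalue equations (\ref{59}) together with the fact that $\F_\Phi$ and $\F_\eta$ are Riesz bases (assumption {\bf p5}). Recall $M=ba$ satisfies $M\Phi_n=\epsilon_n\Phi_n$, so $M$ has the unbounded sequence $\{\epsilon_n\}$ in its point spectrum. Since $\F_\Phi$ is a Riesz basis, there are constants $0<A\leq B<\infty$ with $A\leq\|\Phi_n\|^2\leq B$ for all $n$, and in particular each $\Phi_n$ is a nonzero vector whose norm is bounded away from $0$ and $\infty$. First I would observe that if $b$ were bounded, say $\|b\|=\beta$, and $a$ were bounded, say $\|a\|=\alpha$, then $M=ba$ would be bounded with $\|M\|\leq\alpha\beta$, forcing $|\epsilon_n|\,\|\Phi_n\|=\|M\Phi_n\|\leq\alpha\beta\,\|\Phi_n\|$, hence $\epsilon_n\leq\alpha\beta$ for all $n$, contradicting $\sup_n\epsilon_n=\infty$. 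So at least one of $a$, $b$ is unbounded; the remaining work is to show \emph{both} are.

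Next I would handle $b$ directly. From (\ref{58}) we have $b\,\Phi_n=\sqrt{\epsilon_{n+1}}\,\Phi_{n+1}$, so $\|b\,\Phi_n\|=\sqrt{\epsilon_{n+1}}\,\|\Phi_{n+1}\|\geq\sqrt{\epsilon_{n+1}}\,\sqrt{A}$, while $\|\Phi_n\|\leq\sqrt{B}$. Therefore the ratio $\|b\,\Phi_n\|/\|\Phi_n\|\geq\sqrt{A\,\epsilon_{n+1}/B}\to\infty$ as $n\to\infty$, which shows $b$ is unbounded. The symmetric computation using $a^\dagger\eta_n=\sqrt{\epsilon_{n+1}}\,\eta_{n+1}$ and the Riesz-basis bounds for $\F_\eta$ shows $a^\dagger$ is unbounded, and since an operator and its adjoint have the same norm, $a$ is unbounded as well. (Alternatively one can argue that if $a$ were bounded then, since $b=S_\Phi\,a^\dagger\,S_\Phi^{-1}$ up to the intertwining of Theorem~\ref{th1}, or more elementarily since $M^\dagger=\M=a^\dagger b^\dagger$ has eigenvalues $\epsilon_n$ on the Riesz basis $\F_\eta$, boundedness of $a$ would force boundedness of $b^\dagger$ and hence of $M$, again a contradiction; but the direct estimate via (\ref{58}) is cleaner and avoids invoking the theorem.)

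The only subtlety — and the one place I would be careful — is the passage from ``$\Phi_n$ lies in the domain of $b$ and $\|b\Phi_n\|/\|\Phi_n\|$ is unbounded'' to ``$b$ is unbounded as a densely defined operator.'' This is immediate from the definition of unboundedness: an operator $b$ with domain $D(b)$ is bounded iff $\sup\{\|b\xi\|/\|\xi\| : \xi\in D(b),\ \xi\neq0\}<\infty$, and we have exhibited an explicit sequence in $D(b)$ (guaranteed to consist of nonzero vectors by the Riesz lower bound $A>0$) along which this ratio diverges. No completeness or closability hypothesis on $b$ is needed. I expect no real obstacle here; the whole argument is a few lines, the essential input being the two-sided norm bounds $A\leq\|\Phi_n\|^2,\|\eta_n\|^2\leq B$ supplied by {\bf p5} together with the raising relations (\ref{58}) and the hypothesis $\sup_n\epsilon_n=\infty$.
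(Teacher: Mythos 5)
Your argument is correct, and it takes a mildly but genuinely different route from the paper. The paper's proof leans on Theorem \ref{th1}: it passes to the operator $c$ and the orthonormal basis $\F_{\hat\Phi}$ furnished by that theorem, computes $\|c\hat\Phi_n\|^2=\epsilon_n$ (no norm ratios needed, since $\|\hat\Phi_n\|=1$), concludes that $c$ is unbounded, and then transfers unboundedness to $a$ and $b$ through the similarity $a=TcT^{-1}$, $b=Tc^{\dagger}T^{-1}$ with $T$ and $T^{-1}$ bounded. You instead stay entirely in the biorthogonal picture: the only input you take from {\bf p5} is the two-sided bound $0<A\leq\|\Phi_n\|^2\leq B<\infty$ (and likewise for $\F_\eta$), which you combine with the ladder relations to get $\|b\Phi_n\|/\|\Phi_n\|\geq\sqrt{A\epsilon_{n+1}/B}\to\infty$. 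The two arguments are the same estimate in disguise --- the boundedness of $T^{\pm1}$ is exactly what produces your two-sided norm bounds --- but yours has the advantage of being self-contained: Theorem \ref{th1} is only asserted in this paper (its proof is deferred to a reference), whereas your proof uses only {\bf p5}, equation (\ref{58}), and the hypothesis $\sup_n\epsilon_n=\infty$. Two small remarks: your opening paragraph (boundedness of both would bound $M$) only yields that \emph{at least one} of $a,b$ is unbounded and is therefore dispensable, as you note; and for $a$ you could argue even more directly from $a\Phi_n=\sqrt{\epsilon_n}\,\Phi_{n-1}$ in (\ref{56}) rather than detouring through $a^{\dagger}$ on $\F_\eta$, though the adjoint route is also sound (a densely defined operator that is bounded on its domain has a bounded adjoint, so an unbounded adjoint forces $a$ itself to be unbounded).
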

\begin{proof}
Our previous theorem shows that, starting from $(a,b,\{\epsilon_n\})$, we can construct a third operator $c$ and an o.n. basis $\F_{\hat\Phi}=\{\hat\Phi_n\}$ such that $c\,\hat\Phi_n=\sqrt{\epsilon_n }\Phi_{n-1}$. This implies, in particular, that $\|c\hat\Phi_n\|^2=\epsilon_n$, so that
$$
\|c\|=\sup_{\|f\|=1}\|cf\|\geq \sup_n\|c\hat\Phi_n\|=\sup_n\sqrt{\epsilon_n }=\infty.
$$
Then, since $a$ and $c$ are related as in (\ref{515}), it is clear that $a$ is unbounded as well. The same conclusion applies for $b$.
\end{proof}

An important feature of PB is the existence of an operator which intertwines between $N$ and $\N$, see (\ref{219}). The same intertwining relations can be deduced also in this framework: let us call $M_0=c^\dagger c$. Then, if we work under the assumption of Theorem \ref{th1}, we deduce easily that
\be
T\M=M_0T,\qquad MT=TM_0,
\label{516}\en
which also imply, recalling that $T^{-1}$ exists in $\Hil$, also that $MT^2=T^2\M$.

\section{Examples}

This section is devoted to some examples of our framework. We begin with something directly related to non-linear coherent states. Then we discuss an example in a finite-dimensional Hilbert space. After that we construct a general (and quite abstract) strategy which produces NLRPB. We end this section showing how quons fit into our settings.

\subsection{A first class of examples}

Let $a$ and $b$ be two operators satisfying Assumptions 1-4 of Section II, and let us define two new operators $A:=a$ and $B:=f(N)b$, where $N=ba$ and $f(x)$ is a {\em sufficiently regular} function of $x$ like, for instance, an odd polynomial in $x$. We have also to require that $f(0)=0$ and that $f(x)\geq0$ for $x\geq0$. We will always assume in the following that our formulas make sense, and we will add some comments when needed. It is clear that $[A,B]\neq\1$ in general. It is interesting to consider the simplest case: $f(N)=N$. In this case the commutators of $A$, $B$ and $N$ close: $[A,B]=2N+\1$, $[N,A]=-A$, $[N,B]=B$, but this is not true for other choices of $f$.

Let us see now if {\bf p1}-{\bf p5} are satisfied or not. First of all it is possible to check that, if we put  $\Phi_0=\varphi_0$, where $\varphi_0$ is the one in Assumption 1, then $A\Phi_0=0$ and $\Phi_0\in D^\infty(B)$. The first equality is obvious. The second follows from the following identity:
\be
B^n\Phi_0=[f(n)]!\sqrt{n!}\,\varphi_n,
\label{ex1}\en
for all $n\geq0$. Here $\varphi_n=\frac{b^n}{\sqrt{n!}}\,\varphi_0$, see Section II, and $[f(n)]!=f(1)f(2)\cdots f(n)$. The proof of this result goes via induction on $n$. Since, by construction, each $\varphi_n$ is in $\Hil$, then (\ref{ex1}) implies that the vectors $\Phi_n=\frac{1}{\sqrt{\epsilon_n!}}\,B^n\Phi_0=\frac{[f(n)]!\sqrt{n!}}{\sqrt{\epsilon_n!}}\,\varphi_n$  are also well defined in $\Hil$ for all $n\geq0$ and for all possible choices of $\epsilon_n$.

As for {\bf p2}, if we introduce a vector $\eta_0=\Psi_0$, we see that $B^\dagger\eta_0=b^\dagger f(\N)\Psi_0=0$ since $\N\Psi_0=0$ and $f(0)=0$. Moreover, $\eta_0\in D^\infty(A^\dagger)$. In fact we have, for all $n\geq0$, ${A^\dagger}^n\eta_0=\sqrt{n!}\,\Psi_n$. Then we put $\eta_n=\frac{1}{\sqrt{\epsilon_n!}}\,{A^\dagger}^n\eta_0=\sqrt{\frac{n!}{\epsilon_n!}}\,\Psi_n$, which are  well defined vectors in $\Hil$ for all $n\geq0$ and for all possible choices of $\epsilon_n$.

General reasons now imply that the families $\F_\Phi=\{\Phi_n,\,n\geq0\}$ and   $\F_\eta=\{\eta_n,\,n\geq0\}$ should be biorthogonal:
$\left<\Phi_n,\eta_m\right>=\delta_{n,m}\left<\Phi_0,\eta_0\right>$. This produces some constraint on the possible choices of $\epsilon_n$. It is in fact easy to check that $\left<\Phi_n,\eta_m\right>=\frac{n![f(n)]!}{\epsilon_n!}\delta_{n,m}\left<\Phi_0,\eta_0\right>$, so that the previous biorthogonality is recovered choosing $\epsilon_n=nf(n)$, but not in general. With this choice we deduce that
\be
\Phi_n=\sqrt{[f(n)]!}\,\varphi_n,\qquad \eta_n=\frac{1}{[f(n)]!}\,\Psi_n,
\label{ex2}\en
for all $n\geq0$. A simple computation now shows that
$$
A\Phi_n=a\sqrt{[f(n)]!}\,\varphi_n=\sqrt{[f(n)]!}\,\sqrt{n}\varphi_{n-1}=\sqrt{[f(n)]!}\,\sqrt{n}\frac{\Phi_{n-1}}{\sqrt{[f(n-1)]!}}=
\sqrt{\epsilon_n}\Phi_{n-1}
$$
and, analogously, that
$$
B^\dagger\eta_n=
\sqrt{\epsilon_n}\eta_{n-1}
$$
for all $n\geq0$. Hence {\bf p3} is also satisfied. Condition {\bf p4} is clearly also satisfied since the two sets $\F_\Phi$ and   $\F_\eta$ are proportional to the bases $\F_\varphi$ and   $\F_\Psi$, respectively. But we do not expect in general these sets to be also Riesz bases, due to the the presence of $\sqrt{[f(n)]!}$ in the normalization of $\Phi_n$. If we could know that $\|\varphi_n\|$ doesn't go to zero for $n\rightarrow\infty$, then we could conclude that $\F_\Phi$ is not a Riesz basis. However, this is not the case, in general. So we can only claim that {\em there are indications suggesting that {\bf p5} is not satisfied}.

Let us now introduce the operators $M=BA=f(N)N$ and $\M=M^\dagger=\N f(\N)$. Then $M\Phi_n=\epsilon_n\Phi_n$ and $\M\eta_n=\epsilon_n\eta_n$, for all $n\geq0$. Finally, we have (formally, in general),
$$
\left\{
\begin{array}{ll}
S_\Phi=\sum_{n=0}^\infty|\Phi_n><\Phi_n|=\sum_{n=0}^\infty[f(n)]!\,|\varphi_n><\varphi_n|,\\
S_\eta=\sum_{n=0}^\infty|\eta_n><\eta_n|=\sum_{n=0}^\infty\frac{1}{[f(n)]!}\,|\Psi_n><\Psi_n|,
\end{array}
\right.
$$
as well as
$$
\sum_{n=0}^\infty|\Phi_n><\eta_n|=\sum_{n=0}^\infty|\eta_n><\Phi_n|=\1,
$$
and $S_\Phi=S_\eta^{-1}$. Also, using $M=\sum_{n=0}^\infty\epsilon_n\,|\Phi_n><\eta_n|$ and $\M=\sum_{n=0}^\infty\epsilon_n\,|\eta_n><\Phi_n|$, we can easily check that $M S_\Phi=S_\Phi\M$.

\subsection{Extensions of this example}

In \cite{tava} and in many other papers, non linear coherent states have been obtained replacing $a$ and $a^\dagger$, $[a,a^\dagger]=\1$, with $A=ah(\hat n)$ and $A^\dagger=h(\hat n) a^\dagger$, where $\hat n=a^\dagger a$ and $h(x)$ is a sufficiently regular function. In analogy with that, we now define the  self-adjoint operator $H_1=A^\dagger A=h(\hat n)^2\hat n$.  By introducing two new operators $B=h(\hat n)^2 a^\dagger$ and $A=a$ the hamiltonian can be rewritten as $H_1=BA$, and this appears to be a special case of the example in IV.1, with $f(x)$ replaced by $h^2(x)$ and $b=a^\dagger$. Hence, if $h(0)=0$, we are sure that our previous results hold and conditions {\bf p1}-{\bf p4} of Section III are satisfied.

The same conclusions also hold if we define $A$ and $B$ as $B=h(\hat n)^2 b$ and $A=a$, where  $[a,b]=\1$ and where Assumptions 1-4 of Section II are satisfied for $a$ and $b$: we are back to NLPB, not necessarily regular.

The last easy extension we want to discuss here arise from a given similarity transformation: let $A$ and $B$ be non-linear pseudo-bosonic operators satisfying {\bf p1}-{\bf p5}, and let $S$ be a bounded, self-adjoint operator. Then, if we define $\hat A=e^SAe^{-S}$ and  $\hat B=e^SBe^{-S}$, the same properties are satisfied. It is enough to introduce $\hat\Phi_0=e^S\Phi_0$ and $\hat\eta_0=e^{-S}\eta_0$, which are well defined due to our assumptions on $S$. Moreover, it is easy to check that, not surprisingly, $\hat\Phi_n=e^S\Phi_n$ and $\hat\eta_n=e^{-S}\eta_n$, for all $n\geq0$. Needless to say, also the other properties are satisfied. It should be mentioned that if we don't require $S$ to be bounded, then the situation becomes less regular since, for instance, the set of vectors $\{e^S\Phi_n\}$ is not a Riesz basis of $\Hil$.

\subsection{An example in a finite dimensional Hilbert space}

In many papers, \cite{findim}, examples of pseudo-symmetric quantum mechanics are considered in finite dimensional Hilbert spaces. The reason is clear: every operator and every vector can be explicitly constructed and the assumptions are easily checked. For this same reason we construct now such an example.

Let $\Hil={\Bbb C}^2$ be our Hilbert space and let us consider the following matrices on $\Hil$
$$
A=\left(
   \begin{array}{cc}
     -1 & \beta \\
     -\frac{1}{\beta} & 1 \\
   \end{array}
 \right),\qquad B=\left(
   \begin{array}{cc}
     -1 & \delta \\
     -\frac{1}{\delta} & 1 \\
   \end{array}
 \right),\qquad
$$
where $\beta, \delta$ are real quantities  and $\beta\neq\delta$ to prevent the two matrices to commute. It is easy to check that two non zero vectors $\Phi_0$ and $\eta_0$ do exist such that $A\Phi_0=B^\dagger\eta_0=0$. These vectors are $\Phi_0=y\left(
           \begin{array}{c}
             \beta \\
             1 \\
           \end{array}
         \right)
$
and $\eta_0=w\left(
           \begin{array}{c}
             1 \\
             -\delta \\
           \end{array}
         \right)
$, where $y$ and $w$ are  normalization constants which we take real and satisfying $yw(\beta-\delta)=1$. It is clear that we can act with any power of $B$ on $\Phi_0$ and with any power of $A^\dagger$ on $\eta_0$. Hence {\bf p1} and {\bf p2} are satisfied. For all possible positive choices of $\epsilon_1$ (recall that $\epsilon_0$ must be zero), we can define
$$
\Phi_1=\frac{1}{\sqrt{\epsilon_1}}\,B\,\Phi_0=\frac{y}{\sqrt{\epsilon_1}}\left(
           \begin{array}{c}
             \delta-\beta \\
             -\frac{\beta}{\delta}+1 \\
           \end{array}
         \right), \qquad \eta_1=\frac{1}{\sqrt{\epsilon_1}}\,A^\dagger\,\eta_0=\frac{w}{\sqrt{\epsilon_1}}\left(
           \begin{array}{c}
             \frac{\delta}{\beta}-1 \\
              \beta-\delta\\
           \end{array}
         \right).
$$
It is easy to check that both $\F_\Phi=\{\Phi_0,\Phi_1\}$ and $\F_\eta=\{\eta_0,\eta_1\}$ are linearly independent in $\Hil$, if $\beta\neq\delta$. Hence they are bases. More than this: they are Riesz bases, since they can be obtained from the canonical o.n. basis of $\Hil$ via the action of two bounded operators with bounded inverses. It is also easy to check that $\left<\Phi_0,\eta_1\right>=\left<\Phi_1,\eta_0\right>=0$, so that the sets $\F_\Phi$ and $\F_\eta$ are biorthogonal.

Formula (\ref{56}) now fixes the form of $\epsilon_1$. Indeed it is possible to check that, while $A\Phi_0=B^\dagger\eta_0=0$ by construction, $A\Phi_1=\sqrt{\epsilon_1}\Phi_0$ and $B^\dagger\eta_1=\sqrt{\epsilon_1}\,\eta_0$ only if $\epsilon_1=-\frac{1}{\beta\delta}(\beta-\delta)^2$. With this choice, calling
$$
M=BA=\left(
       \begin{array}{cc}
         1-\frac{\delta}{\beta} & \delta-\beta \\
         \frac{1}{\delta}-\frac{1}{\beta} & -\frac{\beta}{\delta}+1 \\
       \end{array}
     \right)\qquad \M=A^\dagger B^\dagger=\left(
       \begin{array}{cc}
         1-\frac{\delta}{\beta} & \frac{1}{\delta}-\frac{1}{\beta} \\
         \delta-\beta & -\frac{\beta}{\delta}+1 \\
       \end{array}
     \right),
$$
it is easy to check that $M\Phi_k=\epsilon_k\Phi_k$ and $\M\eta_k=\epsilon_k\eta_k$, $k=0,1$. It is also easy to compute $[A,B]$, which is different from zero if $\delta\neq\beta$ and is never equal to the identity operator. Also, the resolution of the identity $\sum_{k=0}^1|\Phi_k><\eta_k|=\1$ holds true and we further find
$$
S_\Phi=\sum_{k=0}^1|\Phi_k><\Phi_k|=y^2\left(
       \begin{array}{cc}
         \beta(\beta-\delta) & 0 \\
         0 & 1-\frac{\beta}{\delta} \\
       \end{array}
     \right),$$
and
     $$S_\eta=\sum_{k=0}^1|\eta_k><\eta_k|=w^2\left(
       \begin{array}{cc}
         1-\frac{\delta}{\beta} & 0 \\
         0 & \delta(\delta-\beta) \\
       \end{array}
     \right).
$$
A direct computation finally shows that $S_\Phi=S_\eta^{-1}$ and that, as in the previous example, $MS_\Phi =S_\Phi \M$.

\subsection{A general class of examples}

Let $H$ be a self-adjoint hamiltonian with eigenvectors $\Psi_n$ and corresponding eigenvalues $\epsilon_n$. We assume that $\F_\Psi=\{\Psi_n,\,n\geq0\}$ is an o.n. basis of $\Hil$ and that $0=\epsilon_0<\epsilon_1<\epsilon_2<\ldots$. We can introduce two operators, $a$ and $a^\dagger$, via their actions on $\F_\Psi$: $a\Psi_n=\sqrt{\epsilon_n}\,\Psi_{n-1}$ and $a^\dagger\Psi_n=\sqrt{\epsilon_{n+1}}\,\Psi_{n+1}$, $n\geq0$. Notice that this second equation follows from the first one and from the orthonormality of $\F_\Psi$. Then $H=a^\dagger a$. In bra-ket terms these operators can be written as
$$
a=\sum_{n=1}^\infty \sqrt{\epsilon_n}\,|\Psi_{n-1}><\Psi_n|,\quad a^\dagger=\sum_{n=0}^\infty \sqrt{\epsilon_{n+1}}\,|\Psi_{n+1}><\Psi_n|,\quad H=\sum_{n=0}^\infty \epsilon_n\,|\Psi_{n}><\Psi_n|.
$$
As in Section IV.2 we can now construct NLPB via similarity operators. Hence, let  $S$ be a bounded, self-adjoint operator with bounded inverse $S^{-1}$. Then, calling $A=SaS^{-1}$ and $B=Sa^\dagger S^{-1}$, the triple $(A,B,\{\epsilon_n\})$ produce NLRPB. In this case it is enough to take $\Phi_0=S\Psi_0$ and $\eta_0=S^{-1}\Psi_0$. Properties {\bf p1}-{\bf p5} are trivially satisfied. A particularly simple choice of $S$ is $S=\sum_{n=0}^\infty s_n\,|\Psi_{n}><\Psi_n|$, where $0<\alpha\leq s_n\leq\beta<\infty$ for all $n$. Any such sequence produce a different $S$ and, as a consequence, different NLRPB.

\subsection{Quons}

Let $c$ and $c^\dagger$ satisfy the q-mutator $[c,c^\dagger]_q:=c\,c^\dagger-q\,c^\dagger\, c=\1$. Here $q\in[-1,1]$. These operators where introduced in \cite{moh} and analyzed along the years by several author. Recently they have also been used in connection with the theory of intertwining operators in \cite{bagquons}. Let $\varphi_0$ be the vacuum of $c$: $c\,\varphi_0=0$.
In \cite{moh} it is proved that the eigenstates of $N_0=c^\dagger c$ are analogous to the bosonic ones, but for the normalization. More in details, putting
\be\varphi_n=\frac{1}{\beta_0\cdots\beta_{n-1}}\,{c^\dagger}^n\,\varphi_0\qquad n\geq 0,\label{q1}\en
we have $N_0\varphi_n=\epsilon_n\varphi_n$, with $\epsilon_0=0$, $\epsilon_1=1$ and $\epsilon_n=1+q+\cdots+q^{n-1}$ for $n\geq 1$. Also, the normalization is found to be $\beta_n^2=1+q+\cdots+q^n$, for all $n\geq0$. Hence $\epsilon_n=\beta_{n-1}^2$ for all $n\geq1$.  The set of the $\varphi_n$'s  spans the Hilbert space $\Hil$ and they are mutually orhonormal: $<\varphi_{n},\varphi_{k}>=\delta_{n,k}$. Moreover, we also have
\be
c\,\varphi_{n}=\beta_{n-1}\varphi_{n-1}=\sqrt{\epsilon_n}\,\varphi_{n-1},
\label{q2}\en
which also implies that $c^\dagger\varphi_{n}=\sqrt{\epsilon_{n+1}}\,\varphi_{n+1}$, for all $n\geq 0$. We now introduce a non necessarily self-adjoint operator $S$, which here we take bounded for simplicity. Then, via similarity, we define $A=e^S\,c\,e^{-S}$, $B=e^S\,c^\dagger\,e^{-S}$, $\Phi_0:=e^S\varphi_0$ and $\eta_0=e^{-S^\dagger}\varphi_0$. Hence, conditions {\bf p1}-{\bf p5} are satisfied.

Let us now take, as an example, $S\equiv N_0$. Then $S$ is self-adjoint and bounded. Indeed, assuming for simplicity that $0<q<1$, we find that $\|N_0\|\leq\frac{1}{1-q}$. The operators $A$ and $B$ can be computed explicitly and we find that
$$
A=e^{N_0}c\,e^{-N_0}=e^{N_0(1-q)-\1}\,c,\qquad B=e^{N_0}c^\dagger e^{-N_0}=c^\dagger\,e^{\1+N_0(q-1)}.
$$
Moreover, since $\varphi_n$ are eigenstates of $N_0$, we also deduce that $\Phi_n=e^{\epsilon_n}\varphi_n$ and $\eta_n=e^{-\epsilon_n}\varphi_n$, for all $n\geq0$. It is clear that the related sets $\F_\Phi$ and $\F_\eta$ are biorthonormal and, due to the fact that for all $n\geq0$ we have $1\leq\epsilon_n<\frac{1}{1-q}$, $\F_\Phi$ and $\F_\eta$ are Riesz bases of $\Hil$.

\section{Conclusions}

We have shown that the notions of pseudo-bosons and regular pseudo-bosons can be generalized in order to include in our new settings operators with a discrete spectrum which is not linear in the { occupation number}. Surprisingly the main features of our previous construction do not change significantly: again biorthonormal bases of $\Hil$, which sometimes are Riesz bases, are recovered. These bases are eigenstates of two non self-adjoint operators related by an intertwining operator. We have also shown that this generalization is of the same kind which produces, starting from coherent states, the so-called non linear coherent states. Finally, we have discussed how and why quons fit very naturally within this settings.

\section*{Acknowledgements}
   The author acknowledge M.I.U.R. for financial support.

\end{document}